\documentclass[runningheads,a4paper]{llncs}

\usepackage{amssymb, amsmath, mathtools}
\setcounter{tocdepth}{3}
\usepackage{graphicx, comment, epstopdf}
\usepackage{algorithm, algorithmicx, algcompatible, algpseudocode}
\usepackage[table]{xcolor}
\definecolor{Gray}{gray}{0.90}
\usepackage{color}
\usepackage{tikz, pgfplots, epstopdf, todonotes}

\DeclareMathOperator*{\argmin}{arg\,min}

\newcommand{\veta}[0]{\vect{\eta}}

\newcommand{\vect}[1]{\ensuremath{\boldsymbol{#1}}}
\newcommand{\calD}[0]{\mathcal{D}}
\newcommand{\calL}[0]{\mathcal{L}}
\newcommand{\calP}[0]{\mathcal{P}}
\newcommand{\calO}[0]{\mathcal{O}}
\newcommand{\FP}[0]{\ensuremath{F\!P}}   
\newcommand{\Real}[0]{\ensuremath{\mathbb{R}}}
\newcommand{\keywords}[1]{\par\addvspace\baselineskip
\noindent\keywordname\enspace\ignorespaces#1}
\newcommand{\algoname}[1]{\textnormal{\textsc{#1}}}

\usepackage{url}
\urldef{\mailsa}\path|{firstname.lastname}@vanderbilt.edu|
\urldef{\mailsb}\path|laszka@berkeley.edu| 

\begin{document}

\mainmatter

\title{Optimal Thresholds for Anomaly-Based Intrusion Detection in Dynamical Environments}
\titlerunning{Optimal Thresholds for Intrusion Detection in Dynamical Environments}

\author{Amin Ghafouri\inst{1}
	\and Waseem Abbas\inst{1} \and  Aron Laszka\inst{2} \and Yevgeniy Vorobeychik\inst{1} \and Xenofon Koutsoukos\inst{1}}
\institute{Institute for Software Integrated Systems,\\
	Vanderbilt University, USA\\
	\mailsa\\
	\and
	Department of Electrical Engineering and Computer Sciences, \\ University of California, Berkeley, USA\\
	\mailsb\\
}
\authorrunning{Ghafouri et al.}
\maketitle

\begin{abstract} 
In cyber-physical systems, malicious and resourceful attackers could penetrate a system through cyber means and cause significant physical damage. Consequently, early detection of such attacks becomes integral towards making these systems resilient to attacks. To achieve this objective, intrusion detection systems (IDS) that are able to detect malicious behavior early enough can be deployed. However, practical IDS are imperfect and sometimes they may produce false alarms even for normal system behavior. Since alarms need to be investigated for any potential damage, a large number of false alarms may increase the operational costs significantly. Thus, IDS need to be configured properly, as oversensitive IDS could detect attacks very early but at the cost of a higher number of false alarms. Similarly, IDS with very low sensitivity could reduce the false alarms while increasing the time to detect the attacks. The configuration of IDS to strike the right balance between time to detecting attacks and the rate of false positives is a challenging task, especially in dynamic environments, in which the damage caused by a successful attack is time-varying.

In this paper, using a game-theoretic setup, we study the problem of finding optimal detection thresholds for anomaly-based detectors implemented in dynamical systems in the face of strategic attacks. We formulate the problem as an attacker-defender security game, and determine thresholds for the detector to achieve an optimal trade-off between the detection delay and the false positive rates. In this direction, we first provide an algorithm that computes an optimal \textit{fixed threshold} that remains fixed throughout. Second, we allow the detector's threshold to change with time to further minimize the defender's loss, and we provide a polynomial-time algorithm to compute time-varying thresholds, which we call \textit{adaptive thresholds}. Finally, we numerically evaluate our results using a water-distribution network as a case study.

\keywords{cyber-physical systems, security, game theory, intrusion detection system}
\end{abstract}

\section{Introduction}

In recent years, we have seen an increasing trend of malicious intruders and attackers penetrating into various cyber-physical systems (CPS) through cyber means and causing severe physical damage. Examples of such incidents include the infamous Stuxnet worm \cite{kushner2013real}, cyber attack on German steel plant \cite{lee2014german}, and Maroochy Shire water-services incident \cite{abrams2008malicious} to name a few. To maximize the damage, attackers often aim to remain covert and avoid getting detected for an extended duration of time. As a result, it becomes crucial for a defender to design and place efficient intrusion and attack detection mechanisms to minimize the damage. While attackers may be able to hide the specific information technology methods used to exploit and reprogram a CPS, they cannot hide their final intent: the need to cause an adverse effect on the CPS by sending malicious sensor or controller data that will not match the behavior expected by an anomaly-based detection system \cite{cardenas2011attacks}. Anomaly-based detection systems incorporate knowledge of the physical system, in order to monitor the system for suspicious activities and cyber-attacks.
An important design consideration in such detection systems is to carefully configure them in order to satisfy the expected monitoring goals.

A well-known method for anomaly-based detection is sequential change detection \cite{kailath1998detection}. This method assumes a sequence of measurements that starts under the normal hypothesis and then, at some point in time, it changes to the attack hypothesis. Change detection algorithm attempts to detect this change as soon as possible. In a sequential change detection, there is a \textit{detection delay}, that is, a time difference between when an attack occurs and when an alarm is raised. On the other hand, detection algorithms may induce \textit{false positives}, that is, alarms raised for normal system behavior. In general, it is desirable to reduce detection delay as much as possible while maintaining an acceptable false positive rate. Nevertheless, there exists a trade-off between the detection delay and the rate of false positives, which can be controlled by changing the sensitivity of the the detector. A typical way to control detector sensitivity is through a detection threshold: by decreasing (increasing) detection threshold, a defender can decrease (increase) detection delay and increase (decrease) false positive rate. Consequently, the detection threshold must be carefully selected, since a large value may result in excessive losses due to high detection delays, while a small value may result in wasting operational resources on investigating false alarms. 

Finding an \textit{optimal threshold}, that is, one that optimally balances the detection delay-false positive trade-off, is a challenging problem \cite{laszka2016optimal}. However, it becomes much more challenging when detectors are deployed in CPS with dynamic behavior, that is, when the expected damage incurred from undetected cyber-attacks depends on the system state and time. 
As a result, an attack on a CPS which is in a critical state is expected to cause more damage as compared to an attack in a less critical state. For example, in water distribution networks and electrical grids, disruptions at a high-demand time are more problematic than disruptions at a low-demand time. Hence, defenders need to incorporate time-dependent information in computing optimal detection thresholds when facing strategic attackers. 

We study the problem of finding optimal detection thresholds for anomaly-based detectors implemented in dynamical systems in the face of strategic attacks. We model rational attacks against a system that is equipped with a detector as a two-player game between a defender and an attacker. We assume that an attacker can attack a system at any time. Considering that the damage is time-dependent, the attacker's objective is to choose the optimal time to launch an attack to maximize the damage incurred. On the other hand, the defender's objective is to select the detection thresholds to detect an attack with minimum delay while maintaining an acceptable rate of false positives. To this end, first we present an algorithm that selects an optimal threshold for the detector that is independent of time (i.e., \emph{fixed}). We call it as a \textit{fixed threshold} strategy. Next, we allow the defender to select a time-varying threshold while associating a cost with the threshold change. For this purpose, we present a polynomial time algorithm that computes thresholds that may depend on time. We call this approach the \textit{adaptive threshold} strategy. We present a detailed analysis of the computational complexity and performance of both the fixed and adaptive threshold strategies. Finally, we evaluate our results using a water distribution system as a case study. Since expected damage to the system by an attack is time-dependent, the adaptive threshold strategy achieves a better overall detection delay-false positive trade-off, and consequently minimize the defender's losses. Our simulations indicate that this is indeed the case, and adaptive thresholds outperform the fixed threshold.

The remainder of this paper is organized as follows. In Section~2, we introduce our system model. In Section~3, we present our game-theoretic model and define optimal fixed and adaptive detection thresholds. In Section~4, we analyze both strategies and present algorithms to obtain optimal fixed and adaptive thresholds.
In Section~5, we evaluate these algorithms using numerical example. In Section~6, we discuss related work on detection threshold selection in the face of strategic attacks. Finally, we offer concluding remarks in Section~7.

\section{System Model} 
In this section, we present the system model. For a list of symbols used in this paper, see Table~\ref{tab:symbols}.

\begin{table}[h]
	\centering
	\caption{List of Symbols}
	\label{tab:symbols}
	\renewcommand*{\arraystretch}{1.2}
	\begin{tabular}{| c | p{10cm} |}
		\hline Symbol & Description \\ \hline \hline
		\rowcolor{Gray}$\calD(k)$ & expected damage caused by an attack at timestep $k$ \\
		$\delta(\eta)$ & expected detection delay given detection threshold $\eta$ \\
		\rowcolor{Gray}$\FP(\eta)$ & false positive rate given detection threshold is $\eta$ \\
		$C$ & cost of false alarms\\
		\rowcolor{Gray} $\calP(\eta, k_a)$ & attacker's payoff for threshold $\eta$ and attack time $k_a$ \\
		$\calL(\eta, k_a)$ & defender's loss for threshold $\eta$ and attack time $k_a$ \\  \hline
		\multicolumn{2}{|c|}{Adaptive Threshold} \\ \hline
		\rowcolor{Gray} $\calP(\veta, k_a)$ & attacker's payoff for adaptive threshold $\veta=\{\eta_k\}$ and attack time $k_a$ \\
		$\calL(\veta, k_a)$ & defender's loss for adaptive threshold $\veta=\{\eta_k\}$ and attack time $k_a$ \\
		\hline
	\end{tabular}
\end{table}

\subsection{Attack Model}
Let the system have a finite discrete time horizon of interest denoted by $\{1,...,T\}$. Adversaries may exploit threat channels by compromising the system through a deception attack that starts at time $k_a$ and ends at $k_e$, thus spanning over the interval $[k_a, k_e]$.
Deception attacks are the ones that result in loss of integrity of sensor-control data, and their corresponding danger is especially profound due to the tight coupling of physical and cyber components (see \cite{amin2013quest} for details). 
If an attack remains undetected, it will enable the attacker to cause physical or financial damage. In order to represent the tight relation between the CPS's dynamic behavior and the expected loss incurred from undetected attacks, we model the potential damage of an attack as a function of time.

\begin{definition}
	(Expected Damage Function): Damage function of a CPS is a function $\calD:~\{1,...,T\} \rightarrow \Real_+$, which represents the expected damage $\calD(k)$ incurred to a system from an undetected attack at time $k \in \{1,...,T\}$.
\end{definition}

The definition above describes instant damage at a time $k \in \{1,...,T\}$. Following this definition, expected total damage resulting from an attack that spans over some interval is defined as follows.

\begin{definition}
	(Expected Total Damage): Expected total damage is denoted by a function $\bar{\calD}:\{1,...,T\} \times \{1,...,T\} \rightarrow \Real_+$, which represents the expected total damage $\bar{\calD}(k_a,k_e)$ incurred to a system from an undetected attack in a period $[k_a,k_e]$. Formally,
\begin{equation}
\bar{\calD}(k_a,k_e) = \sum_{k=k_a}^{k_e} \calD(k) \;.
\end{equation}	
\end{definition}

\subsection{Detector}

We consider a defender whose objective is to protect the physical system, which is equipped with a detector. The detector's goal is to determine whether a sequence of received measurements generated through the system corresponds to the normal behavior or an attack. To implement a detection algorithm, we utilize a widely used method known as sequential change detection \cite{kailath1998detection}. This method assumes a sequence of measurements that starts under the normal hypothesis, and then, at some point in time, changes to the attack hypothesis. Change detection algorithm attempts to detect this change as soon as possible. 

\subsubsection{Example (CUSUM).} The Cumulative sum (CUSUM) is a statistic used for change detection. The nonparametric CUSUM statistic $S(k)$ is described by
\begin{equation*}
S(k)= (S(k-1) + z(k))^+,
\end{equation*}
where $S(0)=0$, $(a)^+ = a $ if $a \geq 0$ and zero otherwise, and $z(k)$ is generated by an observer, such that under normal behavior it has expected value of less than zero \cite{cardenas2011attacks}. Assigning $\eta$ as the detection threshold chosen based on a desired false alarm rate, the corresponding decision rule is defined as
\begin{equation*}
d(S(k)) = \left\{ \begin{array}{lcl}
\textrm{Attack} & \textrm{  if } S(k)>\eta \\ \textrm{Normal} & \textrm{otherwise} \\
\end{array}\right.
\end{equation*}

\subsubsection{Detection Delay and False Positive Rate.} 
In detectors implementing change detection, there might be a \textit{detection delay}, which is the time taken by the detector to raise an alarm since the occurrence of an attack.\footnote{Note that any desired definition of detection delay may be considered, for example, stationary average delay \cite{shiryaev1961problem,srivastava1993comparison}.} Further, there might be a \textit{false positive}, which means raising an alarm when the system exhibits normal behavior.
In general, it is desirable to reduce detection delay as much as possible while maintaining an acceptable false positive rate. But, there exists a trade-off between the detection delay and the rate of false positives, which can be controlled by changing the detection threshold. In particular, by decreasing (increasing) the detection threshold, a defender can decrease (increase) detection delay and increase (decrease) false positive rate. Finding the optimal trade-off point and its corresponding \textit{optimal threshold} is known to be an important problem \cite{laszka2016optimal}, however, it is much more important in CPS, since expected damage incurred from undetected attack directly depends on detector's performance. 

We represent detection delay by the continuous function $\delta:~\Real_+ \to~\mathbb{N} \cup \{0\}$, where $\delta(\eta)$ is the detection delay (in timesteps) when threshold is $\eta$. Further, we denote the attainable false positive rate by the continuous function $\FP: \Real_+ \to [0,1]$, where $\FP(\eta)$ is the false positive rate when the detection threshold is $\eta$. We assume that $\delta$ is increasing and $\FP$ is decreasing, which is true for most typical detectors including the CUSUM detector.

\section{Problem Statement}
In this section, we present the optimal threshold selection problem. We consider two cases: 1) Fixed threshold, in which the defender selects an optimal threshold and then keeps it fixed; and 2) Adaptive threshold, in which the defender changes detection threshold based on time. We model this problems as conflicts between a defender and an attacker, which are formulated as two-player Stackelberg security games.

\subsection{Fixed Threshold}

\subsubsection{Strategic Choices.}
The defender's strategic choice is to select a detection threshold $\eta$. The resulting detection delay and false positive rate are $\delta(\eta)$ and $\FP(\eta)$, respectively. We consider the worst-case attacker that will not stop the attack before detection in order to maximize the damage. Consequently, the attacker's strategic choice becomes to select a time $k_a$ to start the attack. Note that we consider damage from only undetected attacks since the mitigation of non-stealthy attacks is independent of detector.

\subsubsection{Defender's Loss and Attacker's Payoff.}
As an alarm is raised, the defender needs to investigate the system to determine whether an attack has actually happened, which will cost him $C$. When the defender selects threshold $\eta$ and the attacker starts its attack at a timestep $k_a$, the defender's loss (i.e., inverse payoff)~is
\begin{equation}
\label{eq:defenderLoss}
\calL(\eta, k_a) = C \cdot \FP (\eta) \cdot T + \sum_{k=k_a}^{k_a + \delta(\eta)} \calD(k) \;,
\end{equation} 
that is, the amount of resources wasted on manually investigating false positives and the expected amount of damage caused by undetected attacks.

For the strategies $(\eta, k_a)$, the attacker's payoff is
\begin{equation}
\label{eq:attackerUtility}
\calP(\eta, k_a) = \sum_{k=k_a}^{k_a + \delta(\eta)} \calD(k) \;.
\end{equation}
that is, the total damage incurred to the system prior to the expected detection time. The idea behind this payoff function is the assumption of a worst-case attacker that has the goal of maximizing the damage.

\subsubsection{Best-Response Attack and Optimal Fixed Threshold.}
We assume that the attacker knows the system model and defender's strategy, and can thus compute the detection threshold chosen by the defender. Hence, the attacker will play a \emph{best-response} attack to the defender's strategy, which is defined below.

\begin{definition}
	(Best-Response Attack): Taking the defender's strategy as given, the attacker's strategy is a best-response if it maximizes the attacker's payoff.
	Formally, an attack starting at $k_a$ is a best-response attack given a defense strategy $\eta$, if it maximizes $\calP(\eta, k_a)$.
\end{definition}

Further, the defender must choose his strategy expecting that the attacker will play a best-response. We formulate the defender's optimal strategy as strong Stackelberg equilibrium (SSE)~\cite{korzhyk2011stackelberg}, which is commonly used in the security literature for solving Stackelberg games.

\begin{definition}
	\label{def:optimalDefense}
	(Optimal Fixed Threshold): We call a defense strategy optimal if it minimizes the defender's loss given that the attacker always plays a best-response.  
	Formally, an optimal defense is
	\begin{equation}
	\argmin_{\substack{\eta, \\ k_a \in \text{bestResponses}(\eta)}} \calL(\eta, k_a) ,
	\end{equation}
	where $\text{bestResponses}(\eta)$ is the set of best-response attacks against $\eta$.
\end{definition}

\subsection{Adaptive Threshold} 

Although the optimal fixed threshold minimizes the defender's loss considering attacks at critical periods (i.e., periods with high damage), it imposes a high false alarm rate at less critical periods. Adaptive threshold strategies directly address this issue. The idea of adaptive threshold is to reduce the detector's sensitivity during less critical periods (via increasing the threshold), and increase the sensitivity during more critical periods (via decreasing the threshold). As it will be shown, this significantly decreases the loss corresponding to false alarms. 
However, the defender may not want to continuously change the threshold, since a threshold change requires a reconfiguration of the detector that has a cost. Hence, the rational defender needs to find an \textit{optimal adaptive threshold}, which is a balance between continuously changing the threshold and keeping it fixed.

The adaptive threshold is defined by $\veta = \{\eta_k\}_{k=1}^{T}$. The number of threshold changes is described by $N=|S|$, where $S = \{ k \, | \, \eta_k \neq \eta_{k+1}, k\in \{1,...,T-1\} \}$. If the system is under an undetected attack, the detection delay for each timestep $k$ is the delay corresponding to its threshold, i.e., $\delta(\eta_k)$. We define detection time of an attack $k_a$ as the time index at which the attack is first detected. It is given by
\begin{equation}
\sigma(\veta, k_a)=\{\min k \, | \, \delta(\eta_k) \leq k - k_a \} \;.
\end{equation}
Note that the equation above represents the time index at which the attack is first detected, and not the detection delay. The detection delay for an attack $k_a$ can be obtained by $\delta(\veta, k_a)\coloneqq \sigma(\veta, k_a)-k_a$.

\subsubsection{Strategic Choices.}
The defender's strategic choice is to select the threshold for each time index, given by $\veta=\{\eta_1,\eta_2,...,\eta_T\}$. We call ${\veta}$ to be the set of adaptive threshold. Since we consider a worst-case attacker that will not stop the attack before detection, the attacker's strategic choice is to select a time $k_a$ to start the attack. 

\subsubsection{Defender's Loss and Attacker's Payoff}
Let $C_d$ be the cost associated with each threshold change. When the defender selects adaptive threshold $\veta$, and the attacker starts its attack at a timestep $k_a$, the defender's loss is 
\begin{equation}
	\label{eq:adDefenderLoss}
	\calL(\veta, k_a) = N \cdot C_d + \sum_{k=1}^{T} C \cdot \FP (\eta_k) + \sum_{k=k_a}^{\sigma(\veta, k_a)} \calD(k) \;,
\end{equation} 
that is, the amount of resources spent on changing the threshold, operational costs of manually investigating false alarms, and the expected amount of damage caused by undetected attacks.

For the strategies $(\veta, k_a)$, the attacker's payoff is the total damage prior to the expected detection time,
\begin{equation}
	\label{eq:adAttackerUtility}
	\calP(\veta, k_a)=\sum_{k=k_a}^{\sigma(\veta, k_a)}  \calD(k) \;.
\end{equation}

\subsubsection{Best-Response Attack and Optimal Adaptive Threshold.}
The definitions presented in this part are analogous to the ones discussed above for the case of optimal fixed threshold. We assume the attacker can compute the adaptive threshold, and will play a \emph{best-response} to the defender's strategy, as defined below.

\begin{definition}
	(Best-Response Attack): Taking the defender's strategy as given, the attacker's strategy is a best-response if it maximizes the attacker's payoff.
	Formally, an attack $k_a$ is a best-response given a defense strategy $\veta$, if it maximizes $\calP(\veta, k_a)$ as defined in \eqref{eq:adAttackerUtility}.
\end{definition}

Further, the defender must choose its strategy expecting that the attacker will play a best-response.

\begin{definition}
	\label{def:adOptimalDefense}
	(Optimal Adaptive Threshold): We call a defense strategy \emph{optimal} if it minimizes the defender's loss given that the attacker always plays a best-response with tie-breaking in favor of the defender.  
	Formally, an optimal defense is
	\begin{equation}
		\argmin_{\substack{\veta, \\ k_a \in \text{bestResponses}(\veta)}} \calL(\veta, k_a) ,
	\end{equation}
	where $\text{bestResponses}(\veta)$ is the best-response attack against $\veta$.
\end{definition}

\section{Selection of Optimal Thresholds}
\label{sec:analysis}

In this section, we present polynomial-time algorithms to compute optimal thresholds, both for the fixed and adaptive cases.

\subsection{Fixed Threshold}
\label{sec:fixedThreshold}

To compute an optimal fixed threshold, we present Algorithm~\ref{algo:fixed}. Here, we consider that any detection delay can be achieved by selecting a specific threshold value. Therefore, the algorithm finds an optimal detection delay, from which the optimal threshold value can be selected.
To find the optimal detection delay, the algorithm iterates through all possible values of detection delay and selects the one that minimizes the defender's loss considering a best-response attack. To find a best-response attack $k_a$, given a delay $\delta$, the algorithm iterates through all possible values of $k_a$, and selects the one that maximizes the payoff.

\begin{algorithm}[h!]
	\caption{Algorithm for Optimal Fixed Threshold}
	\label{algo:fixed}
	\begin{algorithmic}[1]
		\State \textbf{Input} $\calD(k)$, $T$, $C$
		\State \textbf{Initialize:} $\delta \gets 0$, $L^\ast \gets \infty $
		\While {$\delta < T$}
		\State $k_a \gets 1$, $P' \gets 0$
		\While {$k_a < T$}  \label{algo1:ka_it}
		\State $P(\delta,k_a) \gets \sum_{k_a}^{k_a+\delta} D(k)$ \label{algo1:payoff}
		\If {$P(\delta,k_a) > P'$} \label{algo1:optpayoff1}
		\State $P' \gets P(\delta,k_a)$
		\State $L' \gets P' + C \cdot \FP(\delta) \cdot T$ \label{algo1:loss}
		\EndIf \label{algo1:optpayoff2}
		\State $k_a \gets k_a+1$
		\EndWhile
		\If {$L' < L^\ast$} \label{algo1:optloss1}
		\State $L^\ast \gets L'$
		\State $\delta^\ast \gets \delta$
		\EndIf \label{algo1:optloss2}
		\State $\delta \gets \delta + 1$  \label{algo1:l+1}
		\EndWhile
		\State \textbf{return} $\delta^\ast$
	\end{algorithmic}
\end{algorithm}

\begin{proposition}
	Algorithm~\ref{algo:fixed} computes an optimal fixed threshold in $\calO(T^2)$ steps.
\end{proposition}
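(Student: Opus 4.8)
The plan is to establish two separate claims hidden in the statement: \emph{correctness}, namely that the returned $\delta^\ast$ (together with any threshold realizing it) attains the minimum of Definition~\ref{def:optimalDefense}, and the \emph{running-time} bound $\calO(T^2)$. The crux of the correctness argument is a reparametrization of the problem by detection delay rather than by threshold.

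First I would justify working with $\delta$ in place of $\eta$. Both terms of the loss in \eqref{eq:defenderLoss} depend on $\eta$ only through the pair $(\delta(\eta),\FP(\eta))$: the damage sum depends on $\eta$ solely via $\delta(\eta)$, and the false-positive cost solely via $\FP(\eta)$. Since $\delta$ is increasing and $\FP$ is decreasing, for each achievable delay value there is a threshold attaining it with the smallest false-positive rate; writing $\FP(\delta)$ for that rate turns the loss into a function of $\delta$ and $k_a$ alone, matching the algorithm's use of $\FP(\delta)$ on line~\ref{algo1:loss}. Because a detection delay need only range over $\{0,\dots,T-1\}$ to stay within the horizon $\{1,\dots,T\}$, it suffices to let $\delta$ sweep that set, which is exactly the outer loop.

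Next I would argue correctness of the inner loop. For fixed $\delta$, line~\ref{algo1:payoff} evaluates the attacker's payoff $\calP(\delta,k_a)$ of \eqref{eq:attackerUtility}, and the guarded update on lines~\ref{algo1:optpayoff1}--\ref{algo1:optpayoff2} maintains $P'=\max_{k_a}\calP(\delta,k_a)$, i.e.\ a best-response payoff. The observation that removes any tie-breaking concern is that in the fixed case the damage term of \eqref{eq:defenderLoss} is \emph{identically} the attacker's payoff \eqref{eq:attackerUtility}; hence every best-response $k_a$ yields the same defender loss $L'=P'+C\cdot\FP(\delta)\cdot T$, precisely the quantity stored on line~\ref{algo1:loss}. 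Thus after the inner loop $L'$ equals the defender's loss under best response for that $\delta$, and the outer guard on lines~\ref{algo1:optloss1}--\ref{algo1:optloss2} retains the minimizing $\delta^\ast$, so $\delta^\ast$ attains $\min_{\delta}\bigl(\max_{k_a}\calP(\delta,k_a)+C\cdot\FP(\delta)\cdot T\bigr)$, which is the value defined in Definition~\ref{def:optimalDefense}.

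The complexity bound is where the only real subtlety lies, and I expect it to be the main point worth spelling out. The outer loop runs $T$ times and the inner loop at most $T-1$ times, so the iteration count is $\calO(T^2)$; the danger is line~\ref{algo1:payoff}, since recomputing $\sum_{k=k_a}^{k_a+\delta}\calD(k)$ from scratch costs $\calO(\delta)=\calO(T)$ and would inflate the total to $\calO(T^3)$. I would close this gap by noting that each payoff becomes an $\calO(1)$ operation once prefix sums $\mathrm{Pre}(j)=\sum_{i=1}^{j}\calD(i)$ are precomputed in $\calO(T)$ time, since $\calP(\delta,k_a)=\mathrm{Pre}(k_a+\delta)-\mathrm{Pre}(k_a-1)$; equivalently, for fixed $\delta$ the window sum updates incrementally as $k_a$ advances via $\calP(\delta,k_a{+}1)=\calP(\delta,k_a)-\calD(k_a)+\calD(k_a{+}\delta{+}1)$. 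Either device makes every inner step $\calO(1)$, yielding the claimed $\calO(T^2)$ total.
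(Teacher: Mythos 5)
Your proposal is correct and follows essentially the same route as the paper: correctness by exhaustive search over all delay--attack-time pairs, and the $\calO(T^2)$ bound by making each payoff evaluation constant-time via an incremental window update (the paper uses exactly this device, though its printed recurrence $P(\delta,k_a)=P(\delta,k_a-1)+\calD(k_a-1)+\calD(k_a+\delta)$ has a sign slip that your version $\calP(\delta,k_a{+}1)=\calP(\delta,k_a)-\calD(k_a)+\calD(k_a{+}\delta{+}1)$ fixes). Your added remarks on the $\eta$-to-$\delta$ reparametrization and the irrelevance of tie-breaking are implicit in the paper but do not change the argument.
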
	
\begin{proof}
	The obtained threshold is optimal since the algorithm evaluates all possible solutions through exhaustive search. Given a pair $(\delta,k_a)$, when computing the attacker's payoff $P(\delta,k_a)$ in Line~\ref{algo1:payoff}, we use the payoff computed in previous iteration, and write $P(\delta,k_a) = P(\delta,k_a-1)+ \calD(k_a-1) + \calD(k_a+\delta)$, which takes constant time. Therefore, the running time of the algorithm is subquadratic in the total number of timesteps $T$. \qed
\end{proof}

\subsection{Adaptive Threshold}

We present Algorithm~\ref{algo:adaptive} for finding optimal adaptive thresholds for any instance of the attacker-defender game, which is based on the SSE. The approach comprises 1) a dynamic-programming algorithm for finding minimum-cost thresholds subject to the constraint that the damage caused by a worst-case attack is at most a given value and 2) an exhaustive search, which finds an optimal damage value and thereby optimal thresholds. 
For ease of presentation, we use detection delays $\delta_k$ and the corresponding maximal thresholds $\eta_k$ interchangeably (e.g., we let $\FP(\delta_k)$ denote the false-positive rate of the maximal threshold that results in detection delay $\delta_k$), and we let $\Delta$ denote the set of all attainable detection delay values.

\begin{theorem}
	Algorithm~\ref{algo:adaptive} computes an optimal adaptive threshold.
\end{theorem}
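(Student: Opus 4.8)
The plan is to prove correctness in two stages that mirror the algorithm's two components: the inner dynamic program and the outer exhaustive search over a damage bound. First I would reformulate the objective. Because the attacker plays a best-response and the damage term of $\calL(\veta,k_a)$ coincides with the attacker's payoff $\calP(\veta,k_a)$, all best-response attacks yield the same value of that term; the tie-breaking rule is therefore immaterial to the loss, and against a best-responding attacker the loss depends on $\veta$ alone:
\begin{equation*}
\calL(\veta) = \underbrace{N\cdot C_d + C\sum_{k=1}^{T}\FP(\eta_k)}_{\mathrm{Cost}(\veta)} + \underbrace{\max_{k_a}\calP(\veta,k_a)}_{\mathrm{Dmg}(\veta)} .
\end{equation*}
The defender's problem is thus $\min_\veta\,[\mathrm{Cost}(\veta)+\mathrm{Dmg}(\veta)]$.

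The second step is a decomposition lemma linking this objective to the constrained subproblem the dynamic program solves. Setting $f(D)=\min\{\mathrm{Cost}(\veta) : \mathrm{Dmg}(\veta)\le D\}$, I would prove $\min_\veta[\mathrm{Cost}(\veta)+\mathrm{Dmg}(\veta)]=\min_D[f(D)+D]$ by a two-sided argument: for the optimal $\veta^\ast$ with $D^\ast=\mathrm{Dmg}(\veta^\ast)$, feasibility of $\veta^\ast$ gives $f(D^\ast)+D^\ast\le\calL(\veta^\ast)$; conversely, for any $D$ the minimizer attaining $f(D)$ satisfies $\calL(\cdot)\le f(D)+D$. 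The key finiteness fact is that $\mathrm{Dmg}(\veta)$ always equals $\barcalD(k_a,k_e)$ for some pair $k_a\le k_e$, so it ranges over a set of only $\calO(T^2)$ distinct values; hence the exhaustive search over these candidate bounds is exhaustive in the relevant sense and the minimum is attained.

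The third step is to verify that the dynamic program computes $f(D)$ for each fixed $D$. Since $\calD(k)\ge 0$, I would first translate $\mathrm{Dmg}(\veta)\le D$ into per-attack deadlines: each start time $k_a$ has a latest safe detection time $\tau_D(k_a)=\max\{k_e : \barcalD(k_a,k_e)\le D\}$, which is non-decreasing in $k_a$. Using $\sigma(\veta,k_a)=\min\{k : k-\delta(\eta_k)\ge k_a\}$, feasibility becomes the condition that, at every timestep $k$, the running maximum $M_k=\max_{j\le k}\bigl(j-\delta(\eta_j)\bigr)$ of covered attack indices is at least the largest index whose deadline has expired by $k$; I would show this local condition is equivalent to the global constraint. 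This exposes the correct sufficient statistic for a forward sweep: the state (current timestep, running maximum $M$, previously chosen threshold) determines both the remaining feasibility constraints and the incremental cost ($C\cdot\FP(\eta_k)$ per step, plus $C_d$ whenever the threshold changes). Establishing optimal substructure over this polynomial-size state space then yields that the program returns $f(D)$. Combining the three steps, the algorithm outputs $\argmin_D[f(D)+D]=\min_\veta\calL(\veta)$, so the recovered thresholds are optimal.

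The main obstacle I anticipate is the third step, where the difficulty is intrinsically the non-locality of detection: $\sigma(\veta,k_a)$ depends on the entire future tail of the threshold sequence, so $\mathrm{Dmg}(\veta)\le D$ does not factor across timesteps in any obvious way. The heart of the argument is to prove that enforcing the deadlines through the running-maximum condition is both necessary and sufficient for feasibility, and that $(\text{time},M,\text{previous threshold})$ is a genuine sufficient statistic — i.e., that an optimal completion of a partial threshold assignment depends on the past only through these quantities. Handling the boundary cases, in particular attacks that are never detected within the horizon $\{1,\dots,T\}$ and the treatment of $\tau_D$ when no admissible $k_e$ exists, is where the remaining care is required.
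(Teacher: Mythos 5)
Your proposal is correct and follows essentially the same route as the paper's proof: the decomposition $\min_\veta \calL = \min_P [TC(P) + P]$ with the observation that only the $\calO(T^2)$ damage values $\barcalD(k_a,k_e)$ need be searched, plus a dynamic program whose state is (timestep, reach of as-yet-undetected attacks, previous threshold) — your running maximum $M_k$ is exactly the paper's counter $m$ via $m = n-1-M_{n-1}$, and your per-start deadlines $\tau_D(k_a)$ correspond to the paper's per-step check that the worst attack detectable at step $n$ (the one starting at $n-m$) causes at most $P$ damage. The only differences are presentational (you sweep forward where the paper recurses backward from $T$), and the boundary issue you flag about attacks undetected within the horizon is one the paper itself glosses over.
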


\begin{algorithm}[h!]
	\caption{Algorithm for Optimal Adaptive Thresholds}
	\label{algo:adaptive}
	\begin{algorithmic}[1]
		\State \textbf{Input} $\calD(k)$, $T$, $C$
		\State $\text{SearchSpace} \gets \left\{ \sum_{k = k_a}^{k_e} D(k) ~\middle|~ k_a \in \{1, \ldots, T-1\}, ~k_e \in \{n+1, \ldots, T\}\right\}$
		\ForAll {$P \in \text{SearchSpace}$}
		  \State $TC(P), \delta^*_1(P), \ldots, \delta^*_T(P) \gets \algoname{MinimumCostThresholds}(P)$
		\EndFor
		\State {$P^* \gets \argmin_{P \in \text{SearchSpace}} TC(P)$}
		\State $\textbf{return } \delta^*_1(P^*), \ldots, \delta^*_T(P^*)$
		\Statex
	  \Function {MinimumCostThresholds}{$P$}
	    \State $\forall~ m \in \{0, \ldots, T-1\},~ \delta \in \Delta:~ \algoname{Cost}(T+1, m, \delta) \gets 0$
	    \For {$n = T, \ldots, 1$}
	      \ForAll {$m \in \{0, \ldots n - 1\}$}
	        \ForAll {$\delta_{n-1} \in \Delta$}
	          \ForAll {$\delta_n \in \Delta$}
	            \If {$\delta_n > m$}
	              \State $S(\delta_n) \gets \algoname{Cost}(n+1, m+1, \delta_n) + C \cdot \FP(\delta_n)$
	            \ElsIf {$\sum_{k=n-m}^n \calD(k) \leq P$}
	              \State $S(\delta_n) \gets \algoname{Cost}_P(n+1, \delta_n, \delta_n) + C \cdot \FP(\delta_n)$
	            \Else
	              \State $S(\delta_n) \gets \infty$
              \EndIf
	            \If {$\delta_{n-1} \neq \delta_n \land n > 1$}
	              \State $S(\delta_n) \gets S(\delta_n) + C_d$
              \EndIf
	          \EndFor
	          \State $\delta^*(n, m, \delta_{n-1}) \gets \argmin_{\delta_n} S(\delta_n)$
	          \State $\algoname{Cost}(n, m, \delta_{n -1}) \gets \min_{\delta_n} S(\delta_n)$
	        \EndFor
	      \EndFor
	    \EndFor
	    \State $m \gets 0,~ \delta^*_0 \gets \text{arbitrary}$
	    \ForAll {$n = 1, \ldots T$}
	      \State $\delta^*_n \gets \delta^*(n, m, \delta^*_{n-1})$
	      \State $m \gets \min\{m + 1, \delta^*_n\}$
	    \EndFor
	    \State $\textbf{return } \algoname{Cost}(1, 0, \text{arbitrary}), \delta^*_1, \ldots, \delta^*_T$
	  \EndFunction	
	\end{algorithmic}
\end{algorithm}

\begin{proof}[Sketch.]
First, we prove that our dynamic-programming algorithm, called $\algoname{MinimumCostThresholds}$ in Algorithm~\ref{algo:adaptive}, finds minimum-cost thresholds subject to any damage constraint $P$. Then, we show that our exhaustive search finds an optimal damage constraint $P$, which minimizes the defender's loss given that the attacker plays a best response.

\subsubsection{1) Minimum-Cost Thresholds}
In the first part, we assume that we are given a damage value $P$, and we have to find thresholds that minimize the total cost of false positives and threshold changes, subject to the constraint that any attack against these thresholds will result in at most $P$ damage. In order to solve this problem, we use a dynamic-programming algorithm. 
We will first discuss the algorithm without a cost for changing thresholds, and then show how to extend it to consider costly threshold changes.

For any two variables $n$ and $m$ such that $n \in \{1, \ldots, T\}$ and $0 \le m < n$, we define $\algoname{Cost}(n, m)$ to be the minimum cost of false positives from $n$ to $T$ subject to the damage constraint $P$, given that we only have to consider attacks that start at $k_a \in \{n-m, \ldots ,T\}$ and that attacks are not detected prior to~$n$. If there are no thresholds that satisfy the damage constraint $P$ under these conditions, we let $\algoname{Cost}(n, m)$ be $\infty$.\footnote{Note that in practice, $\infty$ can be represented by a sufficiently high natural number.}

We can recursively compute $\algoname{Cost}(n, m)$ as follows. For any $n < T$ and $m$, iterate over all possible detection delay values $\delta_n$, and choose the one that results in the lowest cost $\algoname{Cost}(n, m)$. If $\delta_n > m$, then no attack could be detected at timestep $n$, and $\algoname{Cost}(n, m)$ would be the cost at timestep $n$ plus the minimum cost for timesteps $\{n+1, \ldots, T\}$ given that attacks may start at $\{n-m, \ldots,T\} = \{(n+1) - (m+1), \ldots,T\}$. On the other hand, if $\delta_n \le m$, then some attacks could be detected at timestep $n$, and the worst of these attacks would start at $n-m$. Hence, if $\sum_{k = n - m}^n \calD(k) \le P$, then $\algoname{Cost}(n, m)$ would be the cost at timestep $n$ plus the minimum cost for timesteps $\{n+1, \ldots,T\}$ given that attacks may start at $\{n+1-\delta_n, \ldots,T\}$. Otherwise, there would be an attack that could cause more than $P$ damage, so $\algoname{Cost}(n, m)$ would be $\infty$ by definition since there would be no feasible thresholds for the remaining timesteps. Formally, we let
\begin{equation}\label{Cost3}
\algoname{Cost}(n, m)= \min_{\delta_n}
\begin{cases}
\algoname{Cost}(n+1, m+1) + \FP(\delta_n),& \text{if } \delta_n >  m\\
\algoname{Cost}(n+1, \delta_n) + \FP(\delta_n) ,   & \text{else if } \displaystyle\sum_{k=n-m}^{n} \calD(k) \le P \\
\infty & \text{otherwise }
\end{cases} \;.
\end{equation}

Note that in the equation above, $\algoname{Cost}(n,m)$ does not depend on $\delta_1, \ldots, \delta_{n-1}$, it depends only on the feasible thresholds for the subsequent timesteps. Therefore, starting from the last timestep $T$ and iterating backwards, we are able to compute $\algoname{Cost}(n,m)$ for all timesteps $n$ and all values $m$. Note that for $n = T$ and any $\delta_T$, computing $\algoname{Cost}(T,m)$ is straightforward: if $\sum_{T-m}^T \calD(k) \le P$, then $\algoname{Cost}(T,m)=\FP(\delta_T)$; otherwise, $\algoname{Cost}(T,m) = \infty$. 

Having found $\algoname{Cost}(n, m)$ for all $n$ and $m$, $\algoname{Cost}(1,0)$ is by definition the minimum cost of false positives subject to the damage constraint $P$. The minimizing threshold values can be recovered by iterating forwards from $n = 1$ to $T$ and again using Equation~\eqref{Cost3}. That is, for every $n$, we select the threshold corresponding to the delay value $\delta^*_n$ that attains the minimum cost $\algoname{Cost}(n,m)$, where $m$ can easily be computed from the preceding delay values $\delta^*_1, \ldots, \delta^*_n$.\footnote{Note that in Algorithm~\ref{algo:adaptive}, we store the minimizing values $\delta^*(n, m)$ for every $n$ and $m$ when iterating backwards, thereby decreasing running time and simplifying the presentation of our algorithm.}

\paragraph{Costly Threshold Changes} Now, we show how to extend the computation of $\algoname{Cost}$ to consider the cost $C_d$ of changing the threshold. Let $\algoname{Cost}(n, m, \delta_{n-1})$ be the minimum cost for timesteps starting from $n$ subject to the same constraints as before but also given that the detection delay at timestep $n - 1$ is $\delta_{n-1}$. 
Then, $\algoname{Cost}(n, m, \delta_{n-1})$ can be computed similarly to $\algoname{Cost}(n, m)$: for any $n < T$, iterate over all possible detection delay values $\delta_n$, and choose the one that results in the lowest cost $\algoname{Cost}(n, m, \delta_{n-1})$. If $\delta_{n-1} = \delta_n$ or $n = 1$, then the cost would be computed the same way as in the previous case (i.e., similarly to Equation~\eqref{Cost3}). Otherwise, the cost would have to also include the cost $C_d$ of changing the threshold. Consequently, similarly to Equation \eqref{Cost3}, we define
\begin{equation}\label{Cost5}
\widehat{\algoname{Cost}}(n, m,\delta_{n-1})=
\begin{cases}
\algoname{Cost}(n+1, m+1,\delta_n) + \FP(\delta_n)& \text{if } \delta_n >  m\\
\algoname{Cost}(n+1, \delta_n,\delta_n) + \FP(\delta_n)   & \text{if } \displaystyle\sum_{k=n-m}^{n} \calD(k) \le P \\
\infty & \text{otherwise }
\end{cases},
\end{equation}
and then based on the value of $\delta_{n-1}$, we can compute $\algoname{Cost}(n, m, \delta_{n-1})$ as
\begin{equation}\label{Cost4}
\algoname{Cost}(n, m, \delta_{n-1})= \min_{\delta_n}
\begin{cases}
\widehat{\algoname{Cost}}(n,m,\delta_{n-1}) & \text{if } \delta_n = \delta_{n-1} \lor n=1\\
\widehat{\algoname{Cost}}(n,m,\delta_{n-1}) + C_d   & \text{otherwise }
\end{cases} \;.
\end{equation}
Note that for $n = 1$, we do not add the cost $C_d$ of changing the threshold. Similarly to the previous case, $\algoname{Cost}(1, 0, \text{arbitrary})$ is the minimum cost subject to the damage constraint $P$, and the minimizing thresholds can be recovered by iterating forwards.

\subsubsection{2) Optimal Damage Constraint}
For any damage value $P$, using the above dynamic-programming algorithm, we can find thresholds that minimize the total cost $TC(P)$ of false positives and threshold changes subject to the constraint that an attack can do at most $P$ damage. Since the defender's loss is the sum of its total cost and the damage resulting from a best-response attack, we can find optimal adaptive thresholds by solving
\begin{equation}
\min_P \; TC(P) + P 
\end{equation}
and computing the optimal thresholds $\veta^*$ for the minimizing $P^*$ using our dynamic-programming algorithm.

To show that this formulation does indeed solve the problem of finding optimal adaptive thresholds, we use indirect proof.
For the sake of contradiction, suppose that there exist thresholds $\veta'$ for which the defender's loss $\calL'$ is lower than the loss $\calL^*$ for the solution $\veta^*$ of the above formulation.
Let $P'$ be the damage resulting from the attacker's best response against $\veta'$, and let $TC'$ be the defender's total cost for $\veta'$.
Since the worst-case attack against $\veta'$ achieves at most $P'$ damage, we have from the definition of $TC(P)$ that $TC' \geq TC(P')$.
It also follows from the definition of $TC(P)$ that $L^* \leq TC(P^*) + P^*$.
Combining the above with our supposition $L^* > L'$, we get
$$TC(P^*) + P^* \geq L^* > L' = TC' + P' \geq TC(P') + P' .$$
However, this is a contradiction since $P^*$ minimizes $TC(P) + P$ by definition.
Therefore, $\veta^*$ must be optimal.

It remains to show that Algorithm~\ref{algo:adaptive} finds an optimal damage value $P^*$.
To this end, we show that $P^*$ can be found in polynomial time using an exhaustive search.
Consider the set of damage values $\bar{\calD}(k_a,k_e)$ from all possible attacks $k_a \leq k_e$, that is, the set $$\left\{ \sum_{k = k_a}^{k_e} \calD(k) \, \middle|\, k_a \in \{1, \ldots, T\}, k_e \in \{k_a, \ldots, T\} \right\} .$$
Let the elements of this set be denoted by $P_1, P_2, \ldots$ in increasing order. It is easy to see that for any $i$, the set of thresholds that satisfy the constraint is the same for every $P \in [P_i, P_{i+1})$. Consequently, for any $i$, the cost $TC(P)$ is the same for every $P \in [P_i, P_{i+1})$. Therefore, the optimal $P^*$ must be a damage value $P_i$ from the above set, which we can find by simply iterating over the set.
\qed
\end{proof}

\begin{proposition}
The running time of Algorithm~\ref{algo:adaptive} is $\calO(T^4 \cdot|\Delta|^2)$.
\end{proposition}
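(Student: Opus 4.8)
The plan is to bound the running time by a direct accounting of the nested loops in Algorithm~\ref{algo:adaptive}, showing that each innermost operation can be carried out in constant time after a cheap preprocessing step, and then multiplying together the loop sizes. The only genuine subtlety is a preprocessing trick needed to keep the damage-feasibility test from inflating the bound.

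First I would bound the size of $\text{SearchSpace}$. Since each of its elements is a partial sum $\sum_{k=k_a}^{k_e}\calD(k)$ indexed by a pair $(k_a,k_e)$ with $1 \le k_a \le k_e \le T$, there are at most $\binom{T}{2}+T = \calO(T^2)$ such pairs, so $|\text{SearchSpace}| = \calO(T^2)$. Consequently the main loop of Algorithm~\ref{algo:adaptive} invokes $\algoname{MinimumCostThresholds}$ exactly $\calO(T^2)$ times, and the concluding $\argmin_{P}\,TC(P)$ costs only $\calO(T^2)$, which is dominated.

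Next I would analyze a single call to $\algoname{MinimumCostThresholds}(P)$. Its backward recursion consists of the nested loops over $n \in \{T,\ldots,1\}$, $m \in \{0,\ldots,n-1\}$, and $\delta_{n-1}\in\Delta$, yielding $\calO(T^2\,|\Delta|)$ executions of the $(n,m,\delta_{n-1})$ body. Within each such execution, the loop over $\delta_n\in\Delta$ that computes $S(\delta_n)$, together with the subsequent $\argmin_{\delta_n}$ and $\min_{\delta_n}$ passes, each traverses $\Delta$ once and so contributes $\calO(|\Delta|)$; moreover every branch in the computation of $S(\delta_n)$ merely reads previously stored $\algoname{Cost}$ values and adds $C\cdot\FP(\delta_n)$ and possibly $C_d$, all in constant time. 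Hence the per-$(n,m,\delta_{n-1})$ cost is $\calO(|\Delta|)$, and the whole call runs in $\calO(T^2\,|\Delta|^2)$. The initialization of the $\algoname{Cost}(T+1,\cdot,\cdot)$ table and the forward pass recovering $\delta_1^*,\ldots,\delta_T^*$ are $\calO(T\,|\Delta|)$ and $\calO(T)$ respectively, both dominated. Multiplying by the $\calO(T^2)$ calls then gives the claimed $\calO(T^4\,|\Delta|^2)$.

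The step I expect to require the most care is justifying that the feasibility test $\sum_{k=n-m}^{n}\calD(k) \le P$ appearing in the recursion is $\calO(1)$ rather than $\calO(T)$; evaluated naively this sum would introduce an extra factor of $T$ and yield only an $\calO(T^5\,|\Delta|^2)$ bound, which would not match the statement. The remedy is to precompute the prefix sums $\Sigma(j)=\sum_{k=1}^{j}\calD(k)$ once in $\calO(T)$ time, after which each query $\sum_{k=n-m}^{n}\calD(k)=\Sigma(n)-\Sigma(n-m-1)$ is answered by a single subtraction. I would state this preprocessing explicitly, observe that it can be performed once globally and reused across all $P$, and note that it therefore contributes only an additive $\calO(T)$ term that is absorbed into the dominant $\calO(T^4\,|\Delta|^2)$.
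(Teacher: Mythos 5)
Your proposal is correct and follows essentially the same accounting as the paper: $\calO(T^2)$ candidate damage values, each requiring one run of the dynamic program whose loops over $n$, $m$, $\delta_{n-1}$, and $\delta_n$ cost $\calO(T^2\cdot|\Delta|^2)$. Your explicit treatment of the feasibility test $\sum_{k=n-m}^{n}\calD(k)\le P$ via precomputed prefix sums is a worthwhile detail that the paper leaves implicit (it simply asserts each $\algoname{Cost}$ computation takes $\calO(|\Delta|)$ time), and without some such device the stated bound would indeed not follow for small $|\Delta|$.
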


Note that since possible detection delay values can be upper-bounded by $T$, the running time of Algorithm~\ref{algo:adaptive} is also $\calO(T^6)$.

\begin{proof}
	In the dynamic-programming algorithm, we first compute $\algoname{Cost}(n,m,$ $\delta_{n-1})$ for every $n\in\{1,\ldots,T\}$, $m\in\{1,\ldots,n-1\}$, and $\delta_{n-1} \in \Delta$, and each computation takes $\calO(|\Delta|)$ time. Then, we recover the optimal detection delay for all timesteps $\{1,\ldots,T\}$, and the computation for each timestep takes a constant amount of time. Consequently, the running time of the dynamic-programming algorithm is $\calO(T^2 \cdot |\Delta|^2)$.
	 
In the exhaustive search, we first enumerate all possible damage values by iterating over all possible attacks $(k_a, k_e)$, where $k_a \in \{1, \ldots, T\}$ and $k_e \in \{k_a, \ldots, T\}$. Then, for each possible damage value, we execute the dynamic-programming algorithm, which takes $\calO(T^2 \cdot |\Delta|^2)$ time. Consequently, the running time of Algorithm~\ref{algo:adaptive} is $\calO(T^4 \cdot |\Delta|^2)$. \qed
\end{proof} 

Finally, note that the running time of the algorithm can be substantially reduced in practice by computing $\algoname{Cost}$ in a lazy manner: starting from $n=1$ and $m=0$, compute and store the value of each $\algoname{Cost}(n, m, \delta_{n-1})$ only when it is referenced, and then reuse it when it is referenced again. Unfortunately, this does not change the worst-case running time of the algorithm.

\section{Numerical Results}
\label{sec:numerical}
In this section, we evaluate our approach numerically using an example. In particular, we consider the anomaly-based detection of deception attacks in water distribution networks.
In such networks, an adversary may compromise pressure sensors deployed to monitor the leakages and bursts in water pipes. By compromising sensors, adversary may alter their true observations,
which can then result in physical damage and financial losses. Next, we present the system model and the simulations of our results.

\subsubsection{System Model.} Figure~\ref{fig:daydemand} presents hourly water demand for a water network during a day \cite{djurin2014analysis}.
Since demand is time-dependent, the expected physical damage and financial loss caused by an attack on sensors is also time-dependent. That is, the expected disruptions at a high-demand time would be more problematic than the disruptions at a low-demand time. Therefore, for each timestep $k \in ~\{1,...,24\}$, we can define the expected damage as $\calD(k)=\alpha\cdot d(k)$ where $d(k)$ is the demand at time $k$, and $\alpha\in\Real_+$ is a fixed value for scaling (for example, water price rate). In our experiments, we let $\alpha=2$.
\begin{figure}
	\centering
	\includegraphics[width=10.5cm]{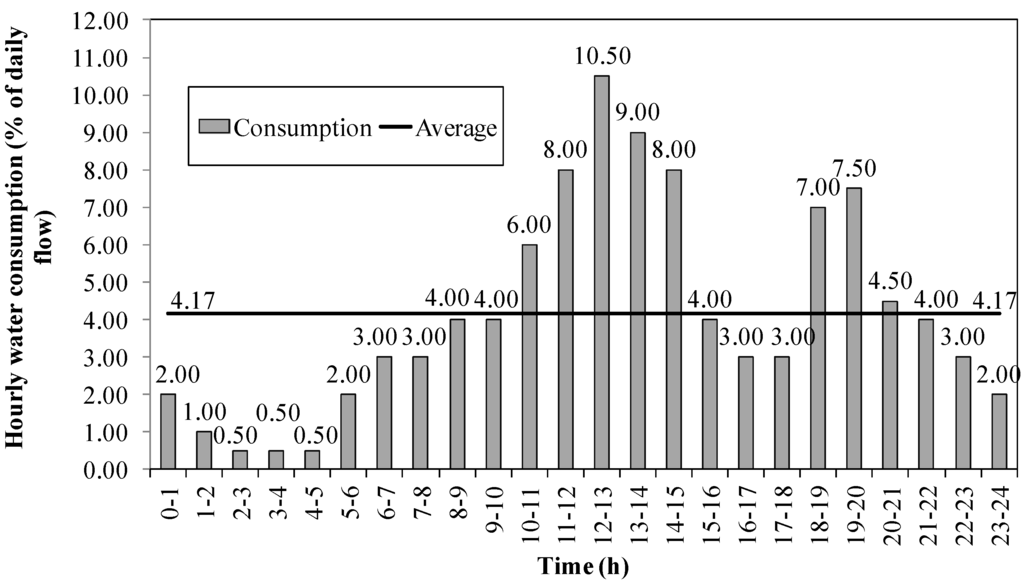}
	\caption{Hourly water demand during a day \cite{djurin2014analysis}.}
	\label{fig:daydemand}
\end{figure}

To discover attacks, we use anomaly-based detection systems implementing sequential change detection. 
Based on the results presented in \cite{cardenas2011attacks}, we derive the attainable detection delays and false alarm rates for the detector as shown in Figure~\ref{fig:tradeoff}. We observe that for the detection delay $\delta=0$, the false positive rate is $\FP(\delta)=0.95$, and for $\delta=23$, the false positive rate is $\FP(\delta)=0.02$. As expected, the detection delay is proportional to the threshold, and the false positive rate is inversely proportional to the threshold~\cite{basseville1993detection}.

\begin{figure}[h!]
	\centering
	\includegraphics[width=9cm]{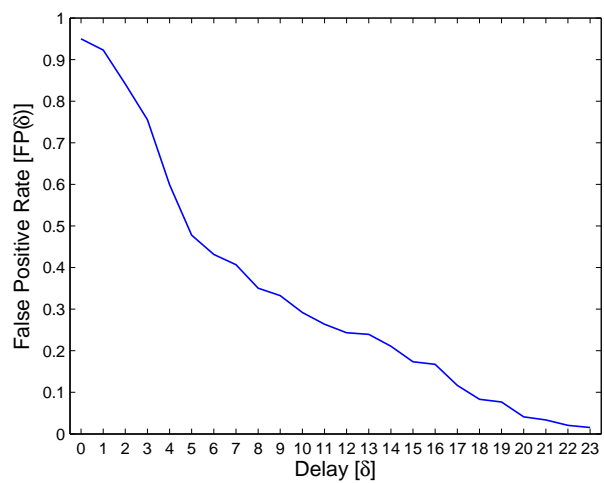}
	\caption{Trade-off between the detection delay and the false positive rate.}
	\label{fig:tradeoff}
\end{figure}

\subsubsection{Fixed Threshold.}
In the case of fixed threshold, the objective is to select the strategy that minimizes the defender's loss~\eqref{eq:defenderLoss} while assuming the attacker will respond using a best-response attack. Letting $C=7$ and using Algorithm~\ref{algo:fixed}, we obtain $\delta^\ast=5$, and the optimal loss $L^* = 171.30$. Figure~\ref{fig:fixed} shows the best-response attack corresponding to this threshold value. The best-response attack starts at $k_a^*=10$ and attains the payoff $P^*=\sum_{k=10}^{15} \calD(k)=91$. Note that if the attacker starts the attack at any other timestep, the damage caused before detection is less than $P^*$. 
\begin{figure}[h!]
	\centering
	\includegraphics[width=9cm]{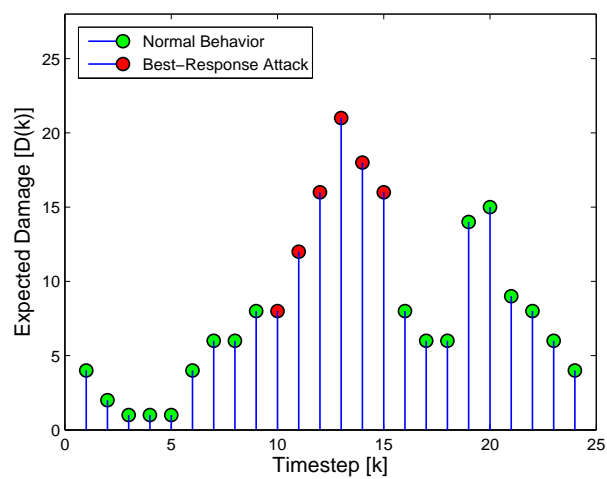}
	\caption{Best-response attack corresponding to the optimal fixed threshold $\delta^\ast=5$.}
	\label{fig:fixed}
\end{figure}

Next, letting $C=8$, we obtain $\delta^\ast=6$ as the optimal defense strategy, which leads to the optimal loss $L^\ast=181.86$, and best-response attack $k_a^\ast=9$, with the payoff $P^\ast = 99$. We observe that, as expected, the optimal delay is higher for the case of false alarms with higher costs.

\subsubsection{Adaptive Threshold.}
Using the same setting, we use Algorithm~\ref{algo:adaptive} to find an optimal adaptive threshold. We let $C=8$ and $C_d=10$. As shown in Figure~\ref{fig:adaptive}, we obtain the optimal adaptive threshold $\delta(k)=~23$ for $k\in\{1,..,11\}$, $\delta(k)=~1$ for $\{12,..,15\}$, and $\delta(k)=3$ for $\{17,...,23\}$. The resulting optimal loss is $L^*=~138.88$\,. Figure~\ref{fig:adaptive} shows the corresponding best-response attack, which starts at $k_a=13$ and, attains the payoff $P^\ast=39$. This figure demonstrates that the detection threshold decreases as the system experiences high-demand, so that the attacks can be detected early enough. On the other hand, as the system experiences low-demand, the threshold increases to have fewer false alarms.
\begin{figure}[h!]
	\centering
	\includegraphics[width=9cm]{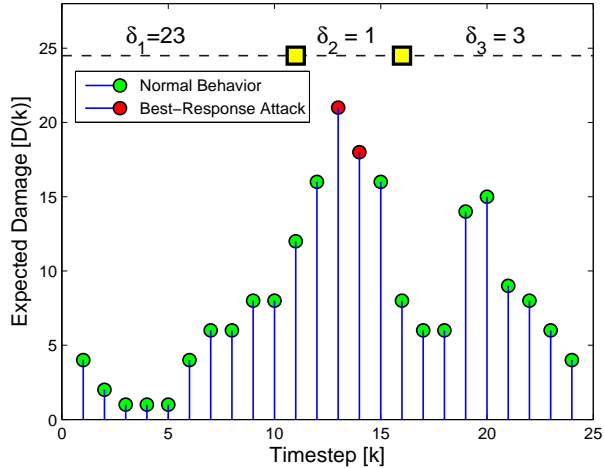}
	\caption{Best-response attack corresponding to the optimal adaptive threshold. The yellow points indicate the times at which the threshold change occurs.}
	\label{fig:adaptive}
\end{figure}

\subsubsection{Comparison.} Keeping $C=8$ fixed, Figure~\ref{fig:lossCd} shows the optimal loss as a function of cost of threshold change $C_d$. For small values of $C_d$, the optimal losses obtained by the adaptive threshold strategy are significantly lower than the loss obtained by the fixed threshold strategy. As the cost of threshold change $C_d$ increases, the solutions of adaptive and fixed threshold problems become more similar. In the current setting, the adaptive threshold solution converges to a fixed threshold when $C_d \geq 45$.
\begin{figure}[h!]
	\centering
	\includegraphics[width=9cm]{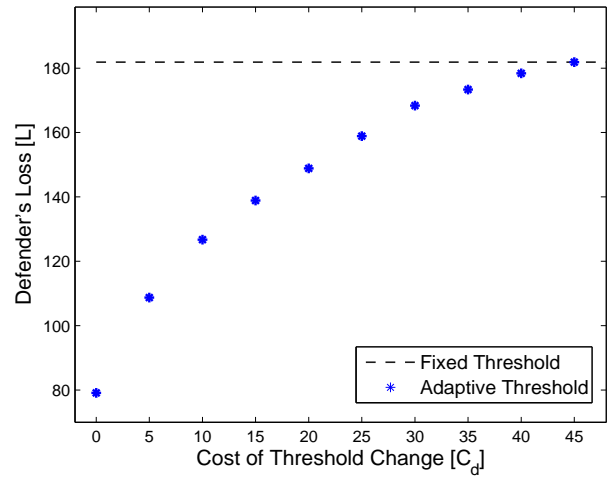}
	\caption{The defender's loss as a function of cost of threshold change.}
	\label{fig:lossCd}
\end{figure}

Furthermore, letting $C_d=8$, Figure~\ref{fig:lossC} shows optimal loss as a function of cost of false positives for fixed and adaptive threshold strategies. It can be seen that in both cases, the optimal loss increases as the cost of false alarms increases. However, in the case of adaptive threshold, the change in loss is relatively smaller than the fixed threshold.
\begin{figure}[h!]
	\centering
	\includegraphics[width=9cm]{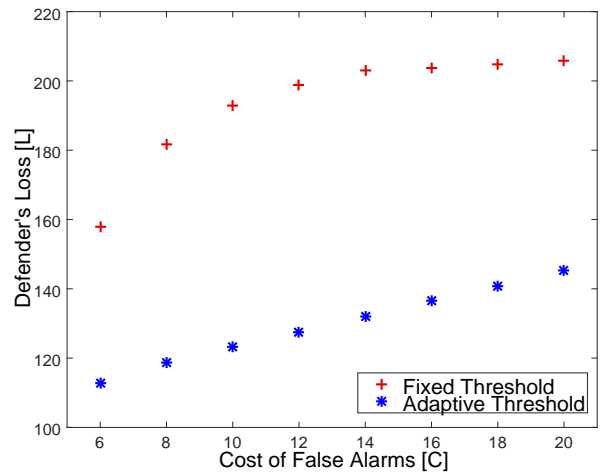}
	\caption{The defender's loss as a function of cost of false alarms.}
	\label{fig:lossC}
\end{figure}

\section{Related Work}

The problem of threshold selection for anomaly detection systems has been widely studied in the literature. Nevertheless, prior work has not particularly addressed the optimal threshold selection problem in the face of strategic attacks when the damage corresponding to an attack depends on time-varying properties of the underlying physical system.

Laszka et al. study the problem of finding detection thresholds for multiple detectors while considering time-invariant damages \cite{laszka2016optimal}. They show that the problem of finding optimal attacks and defenses is computationally expensive, thereby, proposing polynomial-time heuristic algorithms for computing approximately optimal strategies. Cardenas et al. study the use of physical models for anomaly detection, and describe the trade-off between false alarm rates and the delay for detecting attacks \cite{cardenas2011attacks}. Pasqualetti et al. characterize detection limitations for CPS and prove that an attack is undetectable if the measurements due to the attack coincide with the measurements due to some nominal operating condition \cite{pasqualetti2013attack}. 

Signaling games are also used to model intrusion detection \cite{casey2014cyber,estiri2010theoretical}. Shen et al. propose an intrusion detection game based on the signaling game in order to select the optimal detection strategy that lowers resource consumption \cite{shen2011signaling}. Further, Alpcan and Basar study distributed intrusion detection as a game between an IDS and an attacker, using a model that represents the flow of information from the attacker to the IDS through a network \cite{alpcan2003game,alpcan2004game}. The authors investigate the existence of a unique Nash equilibrium and best-response strategies. 

This work is also related to the FlipIt literature \cite{van2013flipit,laszka2013mitigating,laszka2014flipthem}. FlipIt is an attacker-defender game that studies the problem of stealthy takeover of control over a critical resource, in which the players receive benefits proportional to the total time that they control the resource. In \cite{pawlick2015flip}, the authors present a framework for the interaction between an attacker, defender, and a cloud-connected device. They describe the interactions using a combination of the FlipIt game and a signaling game.

In the detection theory literature, Tantawy presents a comprehensive discussion on design concerns and different optimality criteria used in model-based detection problems~\cite{tantawy2011model}. Alippi et al. propose a model of adaptive change detection that can be configured at run-time \cite{alippi2006adaptive}. This is followed by \cite{verdier2008adaptive}, in which the authors present a procedure for obtaining adaptive thresholds in change detection problems. 



\section{Concluding Remarks}

In this paper, we studied the problem of finding optimal detection thresholds for anomaly-based detectors implemented in dynamical systems in the face of strategic attacks. We formulated the problem as an attacker-defender security game that determines thresholds for the detector to achieve an optimal trade-off between the detection delay and the false positive rates.
To this end, first we presented an algorithm that computes optimal fixed threshold that is independent of time. Next, 
we defined adaptive threshold, in which the defender is allowed to change the detector's threshold with time. We provided a polynomial time algorithm to compute optimal adaptive threshold.
Finally, we evaluated our results using a case study. Our simulations indicated that the adaptive threshold strategy achieves a better overall detection delay-false positive trade-off, and consequently minimize the defender's losses, especially when the damage incurred by the successful attack varied with time.

In future work, we aim to extend this work by considering:
1) Multiple systems with different time-varying damage for each subsystem; 2) Sequential hypothesis testing detectors, in which there exits a trade-off between false alarm rate, missed detection rate, and detection delay; and 3) Moving target defense techniques based on randomized thresholds.

\subsection*{Acknowledgment}
This work is supported in part by the the National Science Foundation (CNS-1238959), Air Force Research Laboratory (FA 8750-14-2-0180), National Institute of Standards and Technology (70NANB15H263), Office of Naval Research (N00014-15-1-2621), and by Army Research Office (W911NF-16-1-0069).

\nocite{*}
\bibliographystyle{abbrv}
\bibliography{reference}

\end{document}